\newtheorem{theorem}{Theorem}
\begin{document}

\title{Quantum coherence fraction}

\author{Yao Yao}
\email{yaoyao@mtrc.ac.cn}
\affiliation{Microsystems and Terahertz Research Center, China Academy of Engineering Physics, Chengdu Sichuan 610200, China}
\affiliation{Institute of Electronic Engineering, China Academy of Engineering Physics, Mianyang Sichuan 621999, China}

\author{Dong Li}
\affiliation{Microsystems and Terahertz Research Center, China Academy of Engineering Physics, Chengdu Sichuan 610200, China}
\affiliation{Institute of Electronic Engineering, China Academy of Engineering Physics, Mianyang Sichuan 621999, China}

\author{C. P. Sun}
\affiliation{Graduate School of China Academy of Engineering Physics, Beijing 100193, China}
\affiliation{Beijing Computational Science Research Center, Beijing 100193, China}

\date{\today}

\begin{abstract}
As an analogy of \textit{fully entangled fraction} in the framework of entanglement theory, we have introduced the notion of
\textit{quantum coherence fraction} $C_{\mathcal{F}}$, which quantifies the closeness between a given state and the set of maximally coherent states. By providing an
alternative formulation of the robustness of coherence $C_{\mathcal{R}}$, we have elucidated the relationship between quantum coherence fraction and the normalized version of
$C_{\mathcal{R}}$ (i.e., $\overline{C}_{\mathcal{R}}$), where the role of genuinely incoherent operations (GIO) is highlighted. Numerical simulation shows that though as expected $C_{\mathcal{F}}$
is upper bounded by $\overline{C}_{\mathcal{R}}$, $C_{\mathcal{F}}$ constitutes a good approximation to $\overline{C}_{\mathcal{R}}$ especially in low-dimensional Hilbert spaces.
Even more intriguingly, we can analytically prove that $C_{\mathcal{F}}$ is exactly equivalent to $\overline{C}_{\mathcal{R}}$ for \textit{qubit} and \textit{qutrit} states.
Moreover, some intuitive properties and implications of $C_{\mathcal{F}}$ are also indicated.
\end{abstract}

\maketitle
\section{INTRODUCTION}
The concept of quantum coherence has played a prominent role in the development of quantum physics
and can be viewed as an essential resource for almost all applications in quantum information processing \cite{Streltsov2017,Hu2018}.
Quite recently, the characterization and quantification of quantum coherence has become one of the most active areas of
current research in the field of quantum information theory
\cite{Baumgratz2014,Yao2015,Streltsov2015,Winter2016,Napoli2016,Piani2016,Chitambar2016a,Chitambar2016b,Yadin2016,Marvian2016,Vicente2017}.
The recent significant breakthrough lies in the successful application of quantum resource theory in this endeavor,
elevating the story of quantum coherence to a quantitative theory in a mathematically rigorous framework \cite{Streltsov2017,Chitambar2019}.

However, compared to its celebrated predecessor, the resource theory of quantum entanglement \cite{Vedral1997,Vedral1998,Plenio2007,Horodecki2009},
two subtle but crucial differences emerge: (i) in the framework proposed in Ref. \cite{Baumgratz2014}, the exact value of quantum coherence is
defined with respect to a \textit{prefixed} basis, while all the entanglement measures are invariant under local unitaries, that is,
a local change of basis leave quantum entanglement unchanged \cite{Vedral1997,Vedral1998}; (ii) in the theory of quantum entanglement,
the free states (e.g., separable states) and the free operations (i.e., local operations and classical communication, or LOCC for short) are \textit{naturally} specified
by the LOCC constraint, which is both technologically and fundamentally well-motivated \cite{Plenio2007,Horodecki2009},
while except for the incoherent operations (IO) \cite{Baumgratz2014}, alternative proposals of free operations have also been put forward to impose various constraints on
the resource theory of quantum coherence, such as the maximal incoherent operations (MIO) \cite{Aberg2006a,Aberg2006b},
dephasing-covariant incoherent operations (DIO) \cite{Marvian2016,Chitambar2016a},
strictly incoherent operations (SIO) \cite{Winter2016,Yadin2016} and genuinely incoherent operations (GIO) \cite{Vicente2017},
which are meaningful in their own right but somehow artificial and physically inconsistent \cite{Chitambar2016a,Chitambar2016b}
(see Fig. \ref{fig1} for clarity).

\begin{figure}[htbp]
\begin{center}
\includegraphics[width=0.30\textwidth]{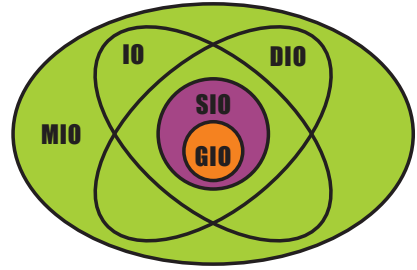}
\end{center}
\caption{(Color online) Hierarchical relationships between MIO, DIO, IO, SIO and GIO \cite{Chitambar2016a,Chitambar2016b,Yao2017}.
}\label{fig1}
\end{figure}

On the other hand, nearly all of the coherence measures have been established within the above framework, including
the $l_1$ norm of coherence, the relative entropy of coherence and the robustness of coherence \cite{Baumgratz2014,Napoli2016},
and meanwhile we become explicitly aware of the fact that the coherence measures are closely related to their analogs  in entanglement theory
\cite{Streltsov2015,Zhu2017}. Nevertheless, it is worth noting that in some scenarios it may \textit{not} be necessary
to directly evaluate entanglement or coherence measures of a state in order to quantify the quantum information processing \textit{usefulness} of a state.
In this sense, one quantifier that attracts our attention is the \textit{fully entangled fraction} (FEF) (or called \textit{singlet fraction})
\cite{Bennett1996}:
\begin{align}
\mathcal{F}(\rho)&=\max_{|\Phi\rangle\in\mathcal{MES}}\langle\Phi|\rho|\Phi\rangle,\label{FEF}\\
&=\max_{U,V}\langle\Phi^+|(U^\dagger\otimes V^\dagger)\rho(U\otimes V)|\Phi^+\rangle,
\end{align}
where $\rho\in\mathcal{H}(\mathbb{C}^d\otimes\mathbb{C}^d)$, $U(V)$ is a local unitary operation, $\mathcal{MES}$ is the set of maximally entangled
states and $|\Phi^+\rangle=1/\sqrt{d}\sum_{i=0}^{d-1}|ii\rangle$.

Notice that in general the FEF is not an entanglement monotone since
LOCC can enhance the value of FEF \cite{Badziag2000}. However, in fact the FEF is not only intimately related to entanglement distillation \cite{Horodecki1997,Horodecki1999a}
and teleportation \cite{Horodecki1999b,Albeverio2002}, but also to many other significant quantum information protocols such as dense coding, entanglement swapping
and Bell inequalities \cite{Grondalski2002}. More recently, this quantity has been identified as a useful measure in characterizing different
nonlocal correlations \cite{Cavalcanti2013,Hsieh2016,Quintino2016} and even involved in work extraction in quantum thermodynamics \cite{Hsieh2017}
and device-independent state estimation \cite{Bardyn2009}. Therefore, it would be intuitively motivated to introduce a counterpart
in coherence theory, that is, \textit{quantum coherence fraction}, and pursue its properties and implications.

This paper is organized as follows.
In Sec. \ref{sec2}, we explicitly present the definition of quantum coherence fraction $C_{\mathcal{F}}$ and list some of its properties,
illustrating the relations between $C_{\mathcal{F}}$ and other important coherence measures.
In Sec. \ref{sec3}, we offer an alternative formulation of the robustness of coherence $C_{\mathcal{R}}$, and consequently,
we discuss the relationship between $C_{\mathcal{F}}$ and the normalized version of $C_{\mathcal{R}}$,
where an inequality is established and the role of GIO is highlighted.
In Sec. \ref{sec4}, we provide a detailed numerical analysis of this inequality, and intriguingly,
we can analytically prove that $C_{\mathcal{F}}$ is exactly equivalent to $\overline{C}_{\mathcal{R}}$ for qubit and qutrit states.
Discussions and final remarks are given in Sec. \ref{sec5} and several open questions are raised for future research.

\section{Definition and basic properties}\label{sec2}
Throughout this paper, we consider the $\textit{d}$-dimensional Hilbert space $\mathcal{H}(\mathbb{C}^d)$,
let $\mathcal{D}(\mathbb{C}^d)$ be the convex set of density operators acting on $\mathcal{H}(\mathbb{C}^d)$,
and adopt the computational basis $\{|i\rangle\}_{i=0}^{d-1}$ as the incoherent basis \cite{Baumgratz2014}.
Thus, all diagonal density operators in this basis constitute the set of incoherent states
\begin{align}
\mathcal{I}=\left\{\rho\in\mathcal{D}(\mathbb{C}^d) \, | \, \rho=\Delta(\rho)\right\},
\end{align}
where $\Delta(\rho)=\sum_i|i\rangle\langle i|\rho|i\rangle\langle i|$ represents the completely decohering channel.
Any incoherent operation admits a Kraus representation $\Phi(\rho)=\sum_iK_i\rho K_i^\dagger$ such that every Kraus
operator is required to fulfill $K_i\rho K_i^\dagger/\textrm{tr}(K_i\rho K_i^\dagger)\in\mathcal{I}$ for all $\rho\in\mathcal{I}$  \cite{Baumgratz2014}.
For other important types of incoherent operations, we refer the readers to the review article \cite{Streltsov2017} for
their properties and relations among each other.
For later discussion, we recall two commonly used coherence measures, that is, the $l_1$ norm of coherence
and the robustness of coherence \cite{Baumgratz2014,Napoli2016}:
\begin{gather}
C_{l_1}(\rho)=\sum_{i\neq j}|\rho_{ij}|=\sum_{i,j}|\rho_{ij}|-1,\\
C_{\mathcal{R}}(\rho)= \min_{\tau \in {\mathcal{D}}(\mathbb{C}^d)} \left\{ s\geq 0\ \Big\vert\ \frac{\rho + s\ \tau}{1+s} =: \delta \in \mathcal{I}\right\}.
\end{gather}

In comparison with the definition of the FEF, we propose an analog quantity called \textit{quantum coherence fraction} (QCF):
\begin{align}
C_{\mathcal{F}}(\rho)=\max_{|\phi\rangle\in\mathcal{MCS}}\langle\phi|\rho|\phi\rangle
=\max_{U\in\mathcal{U}_d}\langle\phi^+|U^\dagger\rho U|\phi^+\rangle,
\label{QCF}
\end{align}
where the optimization is over all maximally coherent states ($\mathcal{MCS}$), $|\phi^+\rangle=1/\sqrt{d}\sum_{i=0}^{d-1}|i\rangle$ and
$\mathcal{U}_d$ is the set of diagonal unitary operators. Note that all maximally coherent states are of the form $|\phi\rangle=U_{inc}|\phi^+\rangle$,
where $U_{inc}=\sum_{i=0}^{d-1}e^{\textrm{i}\theta_i}|\pi(i)\rangle\langle i|$ and $\{\pi(i)\}$ is a permutation of $\{i\}$ \cite{Peng2016}.
Here the optimization over all incoherent unitary operators is not necessary, since $|\phi^+\rangle$ is invariant under all permutations.
Intuitively, $C_{\mathcal{F}}(\rho)$ measures how close a given state $\rho$ is to any maximally coherent state and remains unchanged
under diagonal incoherent unitary operations.

In the following, we present some elementary properties of the QCF and discuss its relationship with other coherence quantifiers
(see Appendix \ref{A1} for the proof):

(i) (\textit{Convexity}). $C_{\mathcal{F}}(\rho)$ is convex in $\rho$, that is, for any convex decomposition of a density operator $\rho=\sum_ip_i\rho_i$,
\begin{align}
C_{\mathcal{F}}(\rho)\leq\sum_ip_iC_{\mathcal{F}}(\rho_i).
\end{align}

(ii) For any $\rho\in\mathcal{D}(\mathbb{C}^d)$, we have
\begin{align}
\frac{1}{d}\leq C_{\mathcal{F}}(\rho)\leq\lambda_{max}\leq 1,
\end{align}
where $\lambda_{max}$ is the largest eigenvalue of $\rho$, $C_{\mathcal{F}}(\rho)=1$ if and only if $\rho$ is a maximally coherent state,
and if $\rho$ is an incoherent state, then $C_{\mathcal{F}}(\rho)=1/d$ (see Appendix \ref{A1} for more details).

(iii) For any $\rho\in\mathcal{D}(\mathbb{C}^d)$, $C_{\mathcal{F}}(\rho)$ is upper bounded by the coherence number of $\rho$ in the sense that
\begin{align}
C_{\mathcal{F}}(\rho)\leq\frac{N_c(\rho)}{d},
\end{align}
where the coherence number is defined as $N_c(\rho)=\min_{\{p_i,|\psi_i\rangle\}}\max_iR_c(|\psi_i\rangle)$,
$R_c(|\psi\rangle)$ denotes the coherence rank for pure states, and the optimization in $N_c(\rho)$ is over all pure-state convex decompositions of $\rho$
\cite{Killoran2016,Chin2017,Regula2018}.

(iv) For any $\rho\in\mathcal{D}(\mathbb{C}^d)$, $C_{\mathcal{F}}(\rho)$ is related to a SIO monotone $\mu_d(\rho)$ by the inequality
\begin{align}
C_{\mathcal{F}}(\rho)\leq\frac{1}{d}2^{\mu_d(\rho)},
\end{align}
where $\mu_d(\rho)=D_{max}(\rho\|\Delta(\rho))$ and
$D_{max}(\rho\|\sigma)$ denotes the quantum max-relative entropy between $\rho$ and $\sigma$ \cite{Datta2009}.
Note that $\mu_d(\rho)$ has played a key role in investigating the SIO distillable coherence \cite{Lami2019}.

Beside the above facts, we can evaluate $C_{\mathcal{F}}$ for some particular classes of states. For any pure state $|\psi\rangle=\sum_{i=0}^{d-1}c_i|i\rangle$,
one can obtain
\begin{align}
C_{\mathcal{F}}(|\psi\rangle)=\frac{1}{d}\left(\sum_{i=0}^{d-1}|c_i|\right)^2,
\label{pure}
\end{align}
where in the definition of Eq. (\ref{QCF}) we can choose $U_d=\textrm{diag}\{e^{\textrm{i}\arg(c_i)}\}$.
A general qubit state can be parameterized as
\begin{align}
\rho=
\left(\begin{array}{cc}
p & r \\
r^\ast & 1-p
\end{array}\right)
\end{align}
where $0\leq p\leq 1$ and the off-diagonal entry $r=|r|e^{\textrm{i}\arg(r)}$ with $|r|\leq\sqrt{p(1-p)}$. It is easy to show that
the optimal diagonal unitary operator is $U_d=\textrm{diag}\{e^{\textrm{i}\arg(r)/2},e^{-\textrm{i}\arg(r)/2}\}$, which
transforms $\rho$ into a \textit{real} positive matrix with the magnitude $|r|$ as the off-diagonal entry.
Therefore, for qubit states we have
\begin{align}
C_{\mathcal{F}}(\rho)=\frac{1+C_{l_1}(\rho)}{2}=\frac{1+C_{\mathcal{R}}(\rho)}{2}=\frac{1+2|r|}{2}.
\end{align}
Therefore, if we define the normalized versions of $C_{l_1}(\rho)$ and $C_\mathcal{R}(\rho)$
\begin{align}
\overline{C}_{l_1}(\rho)=\frac{1+C_{l_1}(\rho)}{d},\, \overline{C}_\mathcal{R}(\rho)=\frac{1+C_{\mathcal{R}}(\rho)}{d},
\end{align}
the above results suggest that for any pure or qubit states we have $C_{\mathcal{F}}(\rho)=\overline{C}_{\mathcal{R}}(\rho)=\overline{C}_{l_1}(\rho)$.
Actually, we can prove the following theorem.

\begin{theorem}
Let $\rho \in {\mathcal{D}}(\mathbb{C}^d)$ be a state such that there exists a unitary operator $U \in \mathcal{U}_d$,
which maps $\rho$ into $\rho' = U^{\dagger}\rho U$ with entries $\rho'_{ij} = |\rho_{ij}|$. Then
\begin{align}
C_{\mathcal{F}}(\rho)=\overline{C}_{\mathcal{R}}(\rho)=\overline{C}_{l_1}(\rho).
\end{align}
\label{T1}
\end{theorem}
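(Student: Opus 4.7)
My plan is to reduce the claim to an equality between $C_{l_1}$ and $C_{\mathcal{R}}$ on the phase-aligned state $\rho' = U^{\dagger}\rho U$, and to show that $C_{\mathcal{F}}$ coincides with the $l_1$ side using a direct optimization. Both halves hinge on the hypothesis that a single diagonal unitary trivializes all phases of the off-diagonal entries of $\rho$, and both exploit the fact that $C_{\mathcal{F}}$, $\overline{C}_{l_1}$ and $\overline{C}_{\mathcal{R}}$ are invariant under conjugation by elements of $\mathcal{U}_d$.

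First I would expand
\[
\langle \phi^+|V^{\dagger}\rho V|\phi^+\rangle \;=\; \frac{1}{d}\sum_{i,j} v_i^{\ast}\,v_j\,\rho_{ij}
\]
for $V=\textrm{diag}(v_0,\ldots,v_{d-1})\in\mathcal{U}_d$. This quantity is real and non-negative, so the triangle inequality gives the universal bound $C_{\mathcal{F}}(\rho)\leq \frac{1}{d}\sum_{i,j}|\rho_{ij}| = \overline{C}_{l_1}(\rho)$. The hypothesis then saturates the bound: the prescribed $U$ satisfies $(U^{\dagger}\rho U)_{ij}=|\rho_{ij}|\geq 0$, and choosing $V=U$ makes every summand non-negative, giving $C_{\mathcal{F}}(\rho)=\overline{C}_{l_1}(\rho)$.

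For the robustness side I would pass through $\rho'$ using the diagonal-unitary invariance $C_{\mathcal{R}}(\rho)=C_{\mathcal{R}}(\rho')$, and then rewrite the robustness as
\[
C_{\mathcal{R}}(\rho')\;=\;\min\bigl\{\textrm{tr}(M)-1\;:\;M\textrm{ diagonal},\;M\geq\rho'\bigr\},
\]
which is the alternative formulation announced for Sec.~\ref{sec3} (obtained from the definition by setting $M=(1+s)\delta$). For the upper bound $C_{\mathcal{R}}(\rho')\leq C_{l_1}(\rho')$, the natural trial matrix has $M_{ii}=\sum_{j}\rho'_{ij}$; then $M-\rho'$ has vanishing row sums and non-positive off-diagonal entries, i.e., it is the Laplacian of a weighted graph with edge weights $\rho'_{ij}\geq 0$, which is positive semidefinite, so $M$ is feasible and $\textrm{tr}(M)-1=\sum_{i\neq j}\rho'_{ij}=C_{l_1}(\rho')$. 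For the matching lower bound, evaluating $\langle\phi^+|\cdot|\phi^+\rangle$ on both sides of $M\geq\rho'$ gives $\textrm{tr}(M)/d \geq (1+C_{l_1}(\rho'))/d$ for every feasible $M$, hence $C_{\mathcal{R}}(\rho')\geq C_{l_1}(\rho')$. Combining with the first step yields $C_{\mathcal{F}}(\rho)=\overline{C}_{l_1}(\rho)=\overline{C}_{\mathcal{R}}(\rho)$.

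The main obstacle is the positive-semidefiniteness check on $M-\rho'$: this is precisely where the non-negativity of the off-diagonal entries is indispensable, and correspondingly where the theorem's hypothesis does the real work. The hypothesis amounts to a cocycle condition on $\arg\rho_{ij}$, automatic for pure states (one amplitude per basis vector) and qubits (only one off-diagonal phase), but generically obstructed for $d\geq 3$; without it, the triangle inequality in the first step is strict and the natural $M$ can fail to dominate $\rho'$, so both $C_{\mathcal{F}}<\overline{C}_{l_1}$ and $C_{l_1}\neq C_{\mathcal{R}}$ become possible. Once the Laplacian structure is recognized, the remaining bookkeeping is routine.
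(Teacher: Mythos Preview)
Your argument is correct, but the logical structure differs from the paper's. The paper establishes a cyclic chain
\[
\overline{C}_{l_1}(\rho)\leq C_{\mathcal{F}}(\rho)\leq\overline{C}_{\mathcal{R}}(\rho)\leq\overline{C}_{l_1}(\rho),
\]
where the first inequality uses the hypothesis exactly as you do, the middle one follows by plugging $\tau=d|\phi\rangle\langle\phi|$ into the \emph{dual} SDP of $C_{\mathcal{R}}$ (and holds for \emph{every} state, a fact reused later), and the last one is imported from Piani et~al.\ \cite{Piani2016}. You instead prove two equalities separately: $C_{\mathcal{F}}=\overline{C}_{l_1}$ via the triangle inequality together with saturation, and $\overline{C}_{\mathcal{R}}=\overline{C}_{l_1}$ via the \emph{primal} SDP, exhibiting an explicit optimal feasible point $M$ whose excess $M-\rho'$ is a weighted graph Laplacian. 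What you gain is self-containment: you do not need to cite the general bound $C_{\mathcal{R}}\leq C_{l_1}$, and your construction pinpoints exactly where the non-negativity of the off-diagonals enters (the edge weights of the Laplacian). What the paper's route gains is the general inequality $C_{\mathcal{F}}\leq\overline{C}_{\mathcal{R}}$, valid without the phase-alignment hypothesis, which becomes the starting point for Theorems~\ref{T2} and~\ref{T3}. One minor remark: the formulation $1+C_{\mathcal{R}}(\rho')=\min\{\textrm{tr}(M):M=\Delta(M),\,M\geq\rho'\}$ that you attribute to Sec.~\ref{sec3} is actually the primal SDP already appearing in the paper's own proof of Theorem~\ref{T1}; Theorem~\ref{T2} is a different reformulation in terms of GIOs.
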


\begin{proof}
In fact, if there exists a unitary operator $U \in \mathcal{U}_d$,
which maps $\rho$ into $\rho' = U^{\dagger}\rho U$ with entries $\rho'_{ij} = |\rho_{ij}|$, the following chain of (in)equalities are established
\begin{align}
C_{\mathcal{F}}(\rho)&\geq\langle\phi^+|U^\dagger\rho U|\phi^+\rangle=\textrm{tr}(|\phi^+\rangle\langle\phi^+|\rho')\nonumber\\
&=\frac{1}{d}\sum_{i,j}\rho'_{ij}=\frac{1}{d}\sum_{i,j}|\rho_{ij}|=\overline{C}_{l_1}(\rho).\nonumber
\end{align}
On the other hand, using the dual form of the semidefinite program (SDP) representation of $C_{\mathcal{R}}(\rho)$ \cite{Piani2016,Chitambar2016b}
\begin{align}
1+C_{\mathcal{R}}(\rho)&=\min_{\sigma\in\mathcal{I}}\{s \,|\, \rho\leq s\sigma \}\\
&=\min_\sigma\{\textrm{tr}(\sigma) \,|\, \rho\leq\sigma, \sigma=\Delta(\sigma)\}\\
&=\max_\tau\{\textrm{tr}(\rho\tau)\,|\, \tau\geq0, \Delta(\tau)=\openone\}\\
&\geq\textrm{tr}(\rho \mathcal{J}),
\end{align}
where $\mathcal{J}=d|\phi\rangle\langle\phi|$ and $|\phi\rangle$ is an arbitrary $\mathcal{MCS}$. Since
$C_{\mathcal{F}}(\rho)$ is the maximum value of the right-hand side of the inequality, we finally get $C_{\mathcal{F}}(\rho)\leq\overline{C}_{\mathcal{R}}(\rho)$.
Moreover, for any state $\rho$, it holds that $\overline{C}_{\mathcal{R}}(\rho)\leq\overline{C}_{l_1}(\rho)$ \cite{Piani2016}. Combining
all these arguments,
\begin{align}
\overline{C}_{l_1}(\rho)\leq C_{\mathcal{F}}(\rho)\leq\overline{C}_{\mathcal{R}}(\rho)\leq\overline{C}_{l_1}(\rho),
\end{align}
finally we have $C_{\mathcal{F}}(\rho)=\overline{C}_{\mathcal{R}}(\rho)=\overline{C}_{l_1}(\rho)$.
\end{proof}

Obviously, for a one-parameter subclass of the \textit{maximally coherent mixed states} (MCMS) \cite{Singh2015,Yao2016,Streltsov2018}
\begin{align}
\rho_m=p|\phi^+\rangle\langle\phi^+|+\frac{1-p}{d}\openone_d,
\label{MCMS}
\end{align}
this theorem also holds and thus $C_{\mathcal{F}}(\rho_m)=\overline{C}_{\mathcal{R}}(\rho_m)=\overline{C}_{l_1}(\rho_m)=[p(d-1)+1]/d$.
For more examples, one can refer to Ref. \cite{Piani2016}.

\section{Formulation of Robustness of Coherence}\label{sec3}
In Sec. \ref{sec2}, a preliminary discussion has been presented towards the relationship between $C_{\mathcal{F}}(\rho)$ and $\overline{C}_{\mathcal{R}}(\rho)$.
In this section, as will become clear later, we should take a closer look at the definition of $\overline{C}_{\mathcal{R}}(\rho)$ through an alternative expression,
which can be stated in the following theorem and the essential role of GIO is highlighted (see Fig. \ref{fig2}).
\begin{theorem}
For any $\rho \in {\mathcal{D}}(\mathbb{C}^d)$, $C_{\mathcal{R}}(\rho)$ can be cast as
\begin{align}
\overline{C}_{\mathcal{R}}(\rho)=\max_{\Lambda\in\textrm{GIO}}F[\Lambda(\rho),|\phi^+\rangle\langle\phi^+|],
\end{align}
where $F(\rho,\sigma)=\left(\textrm{tr}\sqrt{\rho^{1/2}\sigma\rho^{1/2}}\right)^2$ denotes the fidelity.
\label{T2}
\end{theorem}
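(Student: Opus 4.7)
The plan is to collapse the fidelity onto a simple overlap and then match the resulting expression against the dual SDP for $C_{\mathcal{R}}$ already used in the proof of Theorem~\ref{T1}. Since $|\phi^+\rangle\langle\phi^+|$ is pure, the fidelity simplifies to
\begin{equation*}
F[\Lambda(\rho),|\phi^+\rangle\langle\phi^+|]=\langle\phi^+|\Lambda(\rho)|\phi^+\rangle=\mathrm{tr}\bigl[\Lambda^{\dagger}(|\phi^+\rangle\langle\phi^+|)\,\rho\bigr],
\end{equation*}
so the theorem reduces to showing that, as $\Lambda$ sweeps over GIO, the effective ``measurement operator'' $M_{\Lambda}\equiv\Lambda^{\dagger}(|\phi^+\rangle\langle\phi^+|)$ ranges over exactly the rescaled dual feasible set $\{\tau/d:\tau\geq 0,\ \Delta(\tau)=\openone\}$ that appears in the dual program $1+C_{\mathcal{R}}(\rho)=\max_{\tau}\{\mathrm{tr}(\rho\tau)\mid \tau\geq 0,\ \Delta(\tau)=\openone\}$.

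For the upper-bound direction I would invoke the defining property of GIO: every such channel admits Kraus operators $K_{m}=\sum_{j}k_{m,j}|j\rangle\langle j|$ that are diagonal in the incoherent basis, with completeness $\sum_{m}|k_{m,j}|^{2}=1$ for each $j$. A direct computation then gives
\begin{equation*}
M_{\Lambda}=\frac{1}{d}\sum_{j,l}\Bigl(\sum_{m}k_{m,j}^{\,*}k_{m,l}\Bigr)|j\rangle\langle l|,
\end{equation*}
which is manifestly positive semidefinite and whose diagonal entries are $(M_{\Lambda})_{jj}=\frac{1}{d}\sum_{m}|k_{m,j}|^{2}=\frac{1}{d}$. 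Hence $\tau_{\Lambda}:=dM_{\Lambda}$ is dual-feasible, delivering $\max_{\Lambda\in\mathrm{GIO}}\langle\phi^+|\Lambda(\rho)|\phi^+\rangle\leq\overline{C}_{\mathcal{R}}(\rho)$.

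The main obstacle is the reverse direction, namely showing that every feasible dual witness is actually attained by a GIO. For this I would reverse the construction: given $\tau\geq 0$ with $\Delta(\tau)=\openone$, spectrally decompose $\tau=\sum_{m}|w_{m}\rangle\langle w_{m}|$ (absorbing eigenvalues into the vectors) and set $K_{m}:=\sum_{j}(w_{m})_{j}|j\rangle\langle j|$. These are diagonal Kraus operators, and the GIO completeness relation $\sum_{m}K_{m}^{\dagger}K_{m}=\openone$ is exactly the unit-diagonal constraint $\tau_{jj}=1$. By construction $M_{\Lambda}=\tau/d$, so maximizing over $\tau$ saturates $\overline{C}_{\mathcal{R}}(\rho)$ and closes the sandwich. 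The crux of the argument is thus the identification of ``diagonal Kraus family with unit-sum squared moduli'' with ``Cholesky-type factorization of a PSD matrix with unit diagonal,'' a one-to-one correspondence that perfectly pairs the primal GIO side with the dual robustness side.
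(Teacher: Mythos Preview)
Your proposal is correct and takes essentially the same approach as the paper: collapse the pure-state fidelity to the overlap $\langle\phi^+|\Lambda(\rho)|\phi^+\rangle$, and then identify the range of $d\,\Lambda^{\dagger}(|\phi^+\rangle\langle\phi^+|)$ over GIO with the dual-feasible set $\{\tau\geq 0:\Delta(\tau)=\openone\}$ of correlation matrices appearing in the SDP dual for $C_{\mathcal{R}}$. The only cosmetic difference is that the paper establishes this bijection via the Schur-product form $\Lambda(\rho)=\tau\circ\rho$ (and $\Lambda^{\dagger}(\rho)=\tau^{T}\circ\rho$), while you equivalently unpack it through diagonal Kraus operators.
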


\begin{figure}[htbp]
\begin{center}
\includegraphics[width=0.25\textwidth]{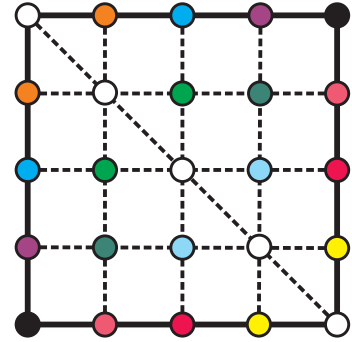}
\end{center}
\caption{(Color online) The effect of a GIO can be viewed as particular matrix \textit{sieve} which preserves the elements on the main diagonal but partially
\textit{blocks} the off-diagonal positions (e.g., white color indicates no blockage) \cite{Yao2017}.
}\label{fig2}
\end{figure}

Before proceeding, we first review the structure of GIO. In fact, in our context, the notion of GIO is equivalent to that of
the Schur channel \cite{Li1997,Paulsen2003,Watrous2018}. Suppose $\Lambda$ is a completely positive map acting on $\mathcal{H}(\mathbb{C}^d)$.
The following statements are equivalent \cite{Vicente2017}:
\begin{enumerate}
\item $\Lambda$ is a GIO (i.e., a Schur channel).
\item $\Lambda$ preserves all incoherent basis states, i.e., $\Lambda(|i\rangle\langle i|)=|i\rangle\langle i|$ for all $i$.
\item For every Kraus representation of $\Lambda(\rho)=\sum_iK_i\rho K_i^\dagger$, all Kraus operators $\{K_{i}\}$ are diagonal in the incoherent basis.
\item $\Lambda$ can be written as a Schur product form
\begin{align}
\Lambda[\rho]=\tau\circ\rho,
\label{Schur}
\end{align}
where the matrix $\tau$ is positive semidefinite such that its diagonals are all equal to 1 (i.e., $\tau_{ii}=1$)
and the Schur product of $A=[a_{ij}]$ and $B=[b_{ij}]$ is denoted by $A\circ B=[a_{ij}b_{ij}]$.
\end{enumerate}
Note that the matrix $\tau$ emerged in Eq. (\ref{Schur}) is called \textit{correlation matrix} \cite{Horn1991}. Therefore, a GIO $\Lambda$ is fully characterized
by a specified correlation matrix $\tau$, which can also be represented as a Gram matrix of a set of dynamical vectors \cite{Yao2017}.
Moreover, in Ref. \cite{Yao2017} the authors have demonstrated that the GIOs in fact constitute the core of other types of incoherent operations (see Fig. \ref{fig1}).

Now we are ready to prove Theorem \ref{T2}.

\begin{proof}
Here we continue to employ the dual form of the semidefinite program (SDP) representation of $C_{\mathcal{R}}(\rho)$ \cite{Piani2016,Chitambar2016b}
\begin{align}
1+C_{\mathcal{R}}(\rho)=\min_\tau\{\textrm{tr}(\rho\tau)\,|\, \tau\geq0, \Delta(\tau)=\openone\},
\label{dual}
\end{align}
where the constraint of $\tau$ is equivalent to requiring that $\tau$ belongs to the set of correlation matrices.
On the other hand, we notice that every GIO maps a uniform matrix (i.e., $\mathcal{J}=d|\phi^+\rangle\langle\phi^+|$) to
a unique correlation matrix:
\begin{align}
d\Lambda_{\textrm{GIO}}(|\phi^+\rangle\langle\phi^+|)=d\tau\circ|\phi^+\rangle\langle\phi^+|=\tau.
\end{align}
We also observe that the dual map of a GIO is still a GIO and can be written as $\Lambda_{\textrm{GIO}}^\dagger(\rho)=\tau^T\circ\rho$,
where the superscript $T$ denotes transpose and $\tau^T$ is also a correlation matrix. Thus, from Eq. (\ref{dual}) we have
\begin{align}
\overline{C}_{\mathcal{R}}(\rho)&=\max_{\Lambda\in\textrm{GIO}}\textrm{tr}[\rho\Lambda_{\textrm{GIO}}^\dagger(|\phi^+\rangle\langle\phi^+|)]\nonumber\\
&=\max_{\Lambda\in\textrm{GIO}}\textrm{tr}[|\phi^+\rangle\langle\phi^+|\Lambda_{\textrm{GIO}}(\rho)]\nonumber\\
&=\max_{\Lambda\in\textrm{GIO}}\langle\phi^+|\Lambda_{\textrm{GIO}}(\rho)|\phi^+\rangle.\label{GIO}
\end{align}
The proof is completed.
\end{proof}

Furthermore, $\overline{C}_{\mathcal{R}}(\rho)$ can also be written as
\begin{align}
\overline{C}_{\mathcal{R}}(\rho)=\max_{\Lambda\in\textrm{GIO}}\langle\phi|\Lambda_{\textrm{GIO}}(\rho)|\phi\rangle,
\end{align}
where $|\phi\rangle$ is an arbitrary $\mathcal{MCS}$. This equality holds since any $|\phi\rangle$ is related to $|\phi^+\rangle$
by a diagonal unitary operator $U_d$ and this unitary transformation can be absorbed into the optimization over all GIOs.
To be more precisely, let $|\nu\rangle$ be the column vector consisting of the diagonal elements of $U_d$ (e.g., $|\nu\rangle=\{e^{\textrm{i}\theta_j}\}$),
and then the action of this unitary transformation can be viewed as a particular GIO
\begin{align}
U_d\rho U_d^\dagger=|\nu\rangle\langle\nu|\circ\rho,\label{unitary}
\end{align}
where in this case $\tau=|\nu\rangle\langle\nu|$ is a \textit{rank-one} correlation matrix.
Besides, from Eqs. (\ref{GIO}) and (\ref{unitary}), it is easy to confirm
$C_{\mathcal{F}}(\rho)\leq\overline{C}_{\mathcal{R}}(\rho)$ again.

\section{Numerical simulation}\label{sec4}
Although we have clarified the relationship between $C_{\mathcal{F}}(\rho)$ and $\overline{C}_{\mathcal{R}}(\rho)$,
the \textit{tightness} of the inequality $C_{\mathcal{F}}(\rho)\leq\overline{C}_{\mathcal{R}}(\rho)$ has not been fully explored.
To gain further insight into this problem, we perform a numerical simulation aiming at verifying the gap between
$C_{\mathcal{F}}(\rho)$ and $\overline{C}_{\mathcal{R}}(\rho)$ for randomly generated states in low dimensions (see Fig. \ref{fig3}).
Our numerical approach is based on the Global Optimization Toolbox provided by {\sc matlab} \cite{https}.
For more details, we refer the readers to Appendix \ref{A2}.

For simplicity, we focus on the low-dimensional cases (e.g., $d=3,4,5,6$). Here we have randomly generated $10^4$ states for each dimension
and the numerical values of $C_{\mathcal{F}}(\rho)$ and $\overline{C}_{\mathcal{R}}(\rho)$ are plotted in Fig. \ref{fig3}. The numerical simulation shows that
although $C_{\mathcal{F}}(\rho)$ never exceed $\overline{C}_{\mathcal{R}}(\rho)$,
$C_{\mathcal{F}}(\rho)$ yields an approximate value of $\overline{C}_{\mathcal{R}}(\rho)$, at least for low-dimensional cases. For instance,
one may introduce a \textit{relative gap}, defined as
\begin{align}
g(\rho)=\frac{\overline{C}_{\mathcal{R}}(\rho)-C_{\mathcal{F}}(\rho)}{\overline{C}_{\mathcal{R}}(\rho)},
\end{align}
since $\overline{C}_{\mathcal{R}}(\rho)$ can be routinely calculated by the SDP method with high accuracy \cite{Piani2016}. In our simulations,
the largest observed value of $g$ varies from $1.59\times10^{-9}$, $0.027$, $0.033$ to $0.038$ with respect to $d=3,4,5,6$,
indicating that the relative gap tends to enlarge along with the increase of dimension.

Two observations are worth emphasizing. First, in contrast to $\overline{C}_{\mathcal{R}}(\rho)$, the evaluation of $C_{\mathcal{F}}(\rho)$
may be faced with the risk of actually obtaining a local maximum, especially for high-dimensional cases (see Appendix \ref{A2}).
Therefore, in these circumstances one may expect that the true points of some states would be slightly ``lifted'' in $C_{\mathcal{F}}$ versus $\overline{C}_{\mathcal{R}}$ diagram
(e.g., compared with the one obtained by numerical approaches) and thus the true relative gap is probably reduced compared to the observed one.
On the other hand, it is known that $C_{\mathcal{F}}(\rho)=\overline{C}_{\mathcal{R}}(\rho)$ for any qubit state.
Intriguingly, numerical results strongly evidence that $C_{\mathcal{F}}(\rho)=\overline{C}_{\mathcal{R}}(\rho)$ also for qutrit states, up to the simulation's accuracy.
In fact, we can analytically prove this conjecture.

\begin{figure*}[htbp]
\begin{center}
\includegraphics[width=16cm]{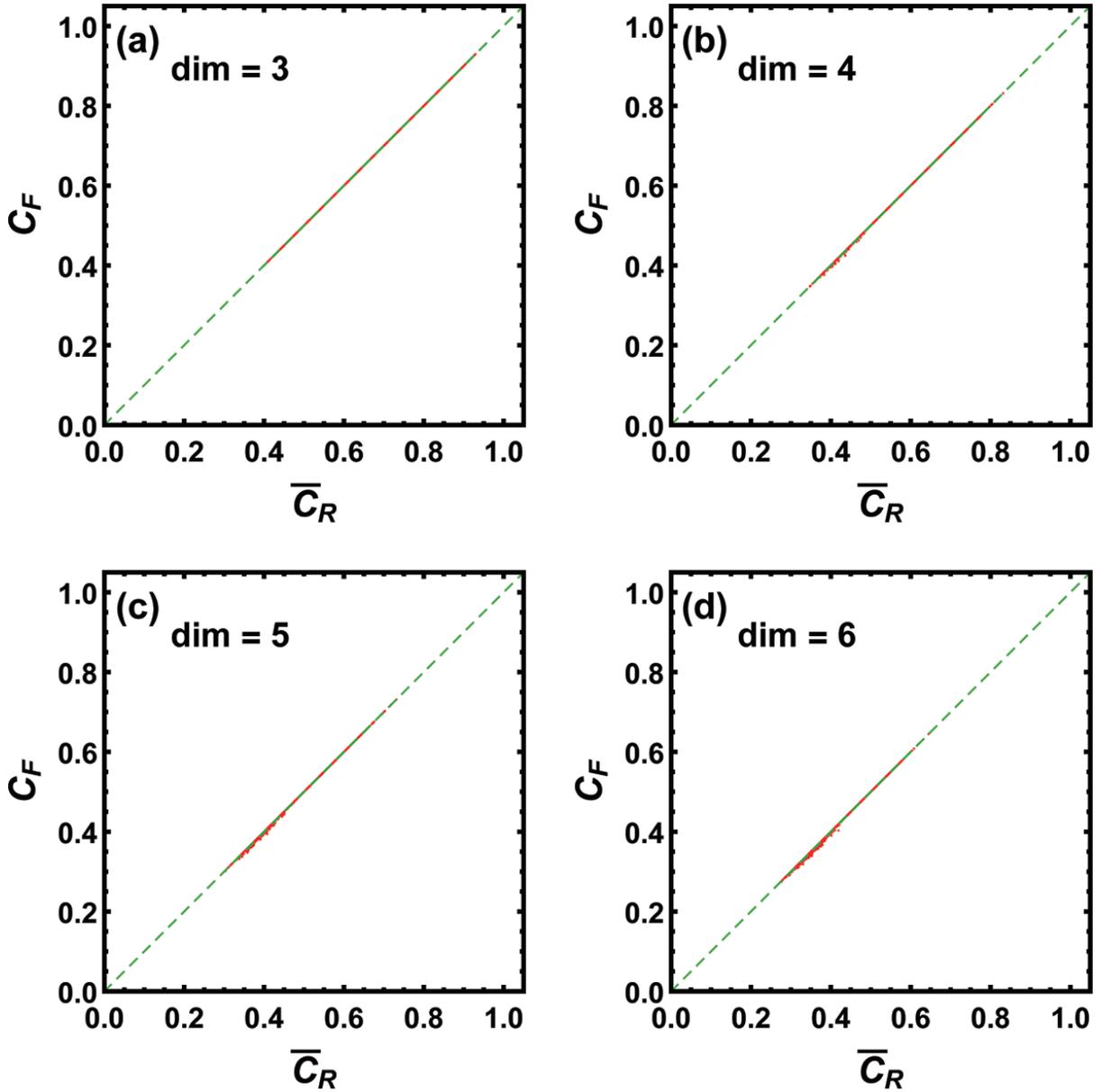}
\end{center}
\caption{(Color online) The numerical simulation (e.g., $10^4$ random states for each dimension) shows that although $C_{\mathcal{F}}(\rho)$ never exceeds $\overline{C}_{\mathcal{R}}(\rho)$,
$C_{\mathcal{F}}(\rho)$ is rather close to the value of $\overline{C}_{\mathcal{R}}(\rho)$, at least for low-dimensional cases ($d=3,4,5,6$). Interestingly, numerical results evidence that
$C_{\mathcal{F}}(\rho)=\overline{C}_{\mathcal{R}}(\rho)$ for qutrit states, up to the algorithm's accuracy.
}\label{fig3}
\end{figure*}

\begin{theorem}
For any qutrit state $\rho$, $C_{\mathcal{F}}(\rho)=\overline{C}_{\mathcal{R}}(\rho)$.
\label{T3}
\end{theorem}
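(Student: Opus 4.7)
The plan is to leverage the alternative formulation of $\overline{C}_\mathcal{R}$ provided by Theorem \ref{T2} and reduce the equality $C_\mathcal{F}(\rho)=\overline{C}_\mathcal{R}(\rho)$ to a purely geometric statement about extreme points of the set of $3\times 3$ correlation matrices. Rewriting the dual SDP representation Eq.~(\ref{dual}), I would express
\[
\overline{C}_\mathcal{R}(\rho) \;=\; \max_{\tau} \frac{1}{d}\sum_{i,j} \tau_{ij}\rho_{ij},
\]
where the maximum ranges over the convex set $\mathcal{E}_d$ of $d\times d$ Hermitian positive semidefinite matrices with unit diagonal. Because the objective is linear in $\tau$, the maximum is attained at an extreme point of $\mathcal{E}_d$; any rank-one extreme point has the form $\tau=|\nu\rangle\langle\nu|$ with $|\nu_i|=1$ and hence corresponds exactly to the action of a diagonal unitary in $\mathcal{U}_d$, so the contribution of such rank-one $\tau$'s to the supremum is precisely $C_\mathcal{F}(\rho)$. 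The theorem therefore reduces to showing that every extreme point of $\mathcal{E}_3$ has rank one.

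This rank-one extremality for $d=3$ will follow from a Pataki-type dimension count. If $\tau\in\mathcal{E}_d$ has rank $r$, then the Hermitian perturbations $H$ supported on the range of $\tau$ form a real vector space of dimension $r^2$, while the diagonal-vanishing conditions $H_{ii}=0$ impose $d$ real linear constraints. Whenever $r^2>d$ a nonzero such $H$ exists, and $\tau\pm\varepsilon H$ remain in $\mathcal{E}_d$ for sufficiently small $\varepsilon>0$ (the kernel eigenvalues of $\tau$ are untouched since $H$ acts trivially on $\ker\tau$, while the strictly positive eigenvalues remain positive by continuity); hence $\tau$ fails to be extreme. Extreme points must therefore satisfy $r^2\leq d$, which for $d=3$ forces $r=1$.

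Combining these ingredients, every optimal $\tau^\star$ for the dual SDP may be taken as $|\nu^\star\rangle\langle\nu^\star|$ with $|\nu^\star_i|=1$; setting $U_d=\mathrm{diag}(\bar\nu^\star_0,\bar\nu^\star_1,\bar\nu^\star_2)\in\mathcal{U}_3$ gives $\langle\phi^+|U_d^\dagger\rho U_d|\phi^+\rangle=\overline{C}_\mathcal{R}(\rho)$, which by definition is bounded above by $C_\mathcal{F}(\rho)$. Combined with the already-established reverse inequality $C_\mathcal{F}(\rho)\leq\overline{C}_\mathcal{R}(\rho)$ from Sec.~\ref{sec3}, this yields equality.

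The hardest part will not be the algebra but the rigorous verification of the Pataki count, particularly that range-supported Hermitian perturbations genuinely preserve positive semidefiniteness for small $\varepsilon$; this rests on the (standard) fact that the face of $\mathcal{E}_d$ containing $\tau$ in its relative interior is affinely isomorphic to the full elliptope on the range of $\tau$. This geometric perspective also neatly explains why the equivalence must fail at $d=4$: the bound $r^2\leq d$ then permits rank-$2$ extreme points of $\mathcal{E}_4$ that are \emph{not} of the rank-one form $|\nu\rangle\langle\nu|$, consistent with the small but nonzero numerical gap observed for $d\geq 4$ in Fig.~\ref{fig3}.
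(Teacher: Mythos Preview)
Your proposal is correct and follows essentially the same route as the paper: both reduce the claim to the fact that for $d=3$ every extreme point of the set of correlation matrices has rank one, and then exploit the rank-one form $\tau=|\nu\rangle\langle\nu|$ to identify the optimal $\tau$ with a diagonal unitary in $\mathcal{U}_3$. The paper cites this extremality result from the matrix-analysis literature and proceeds by writing a general $\tau$ as a convex combination $\sum_k p_k|\nu_k\rangle\langle\nu_k|$ (equivalently, realizing every qutrit GIO as a mixed diagonal-unitary channel) and then bounding the weighted average by the maximum; you instead invoke linearity of the objective directly to localize the optimum at an extreme point and supply a self-contained Pataki dimension count $r^2\leq d$ to force $r=1$, which is a nice addition. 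One cosmetic slip: $\textrm{tr}(\rho\tau)=\sum_{i,j}\rho_{ij}\overline{\tau_{ij}}$, not $\sum_{i,j}\rho_{ij}\tau_{ij}$, though this does not affect the argument since the functional is still real-linear in $\tau$.
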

\begin{proof}
We only need to prove the reverse inequality $C_{\mathcal{F}}(\rho)\geq\overline{C}_{\mathcal{R}}(\rho)$ for qutrit states.
Note that a GIO is completely characterized by the corresponding correlation matrix (e.g., a one-to-one correspondence relationship).
Moreover, the set of correlation matrices is convex and the problem of determining the extreme points of this set has been studied extensively
\cite{Christensen1979,Loewy1980,Grone1990,Li1994}. In fact, it is shown that there exists extreme correlation matrix of rank $r$
if and only if $r^2\leq d$. Thus, if $d\leq3$, the extreme correlation matrix can only be of rank-one. Here the word ``extreme''
means that any correlation matrix can be decomposed as a convex combination of these extreme points. For $d\leq3$, that is to say,
a correlation matrix $\tau$ can (only) be written as a convex pure-state decomposition
\begin{align}
\tau=\sum_kp_k|\nu_k\rangle\langle\nu_k|,
\end{align}
where $|\nu_k\rangle\langle\nu_k|$ is a rank-one correlation matrix such that every entry of $|\nu_k\rangle$ has modulus one.
Therefore, the corresponding GIO is actually a \textit{mixed unitary channel}
\begin{align}
\Lambda_{\textrm{GIO}}=\sum_kp_k|\nu_k\rangle\langle\nu_k|\circ\rho=\sum_kp_kU_k\rho U_k^\dagger,
\end{align}
where we have used Eq. (\ref{unitary}) and $U_k=\textrm{diag}\{|\nu_k\rangle\}$.
Finally, we employ the formulation of $\overline{C}_{\mathcal{R}}(\rho)$ in Theorem \ref{T2}
\begin{align}
\overline{C}_{\mathcal{R}}(\rho)&=\max_{\{U_k\}}\sum_kp_k\langle\phi^+|U_k^\dagger\rho U_k|\phi^+\rangle\\
&\leq\max_{U\in \{U_k\}}\langle\phi^+|U^\dagger\rho U|\phi^+\rangle\leq C_{\mathcal{F}}(\rho).
\end{align}
The last inequality stems from the definition of $C_{\mathcal{F}}(\rho)$.
Thus $C_{\mathcal{F}}(\rho)=\overline{C}_{\mathcal{R}}(\rho)$ for any qutrit state.
\end{proof}

The validity of Theorem \ref{T3} relies on the fact that for $d\leq3$ any GIO is a mixed unitary channel,
which was also proved in Ref. \cite{Vicente2017} by using the Choi Theorem \cite{Choi1975}.
However, here a careful scrutiny of the structure of correlation matrices actually reveals that
a GIO is a mixed unitary channel if and only if the associated correlation matrix $\tau$ can be decomposed as a
convex combination of rank-one correlation matrix. Remarkably, Theorem \ref{T3} is a natural consequence
of Theorem \ref{T2}.

\section{DISCUSSION AND CONCLUSION}\label{sec5}
In this work, we introduce the concept of quantum coherence fraction. This concept intuitively quantifies
how close a given state is to any maximally coherent state in a specified Hilbert space $\mathcal{H}(\mathbb{C}^d)$.
In addition to presenting a series of general properties of QCF, two main observations are clarified:
(i) starting from some examples illustrating the relationship between $C_{\mathcal{F}}$ and $\overline{C}_{\mathcal{R}}$,
the further research along this direction motivates us to provide an alternative formulation of $\overline{C}_{\mathcal{R}}$,
where the critical role of GIO is highlighted; (ii) by virtue of the first observation and the structure of the convex set of correlation matrices,
we have proved that for any qubit or qutrit state $\rho$, $C_{\mathcal{F}}(\rho)=\overline{C}_{\mathcal{R}}(\rho)$.

There still exist some interesting open questions. First, similar to the behavior of fully entangled fraction \cite{Badziag2000}, we
conjecture that $C_{\mathcal{F}}(\rho)$ may not satisfy the monotonicity under general incoherent operations.
However, in low-dimensional cases, it would be an exhausting work to search for an incoherent operation $\Phi$ such that
$C_{\mathcal{F}}(\Phi(\rho))\geq C_{\mathcal{F}}(\rho)$ since from Fig. \ref{fig3} it is shown that $C_{\mathcal{F}}(\rho)$
is quite close to $\overline{C}_{\mathcal{R}}(\rho)$, which is a MIO monotone \cite{Napoli2016}. Second, the simulation results
and the proof in Theorem \ref{T3} may motivate us to investigate the \textit{distance} between the GIO (i.e., the set of Schur channels)
and the set of mixtures of diagonal unitary channels, analogous to the problem raised in Ref. \cite{Yu2017}.
Moreover, in fact $C_{\mathcal{F}}(\rho)$ and $\overline{C}_{\mathcal{R}}(\rho)$ are all particular instances of
the quantity called \textit{the fidelity of coherence distillation} proposed in Ref. \cite{Regula2018a}. Therefore,
we expect that one would get a better understanding of $C_{\mathcal{F}}(\rho)$ from an operational perspective.

\begin{acknowledgments}
Y.Y. is particularly grateful for helpful discussions with G. H. Dong.
This research is supported by the Science Challenge Project (Grant No. TZ2018003-3) and
the National Natural Science Foundation of China (Grants No. 11605166 and No. 61875178).
C.P.Sun also acknowledges financial support from the NSFC (Grant No. 11534002),
the NSAF (Grants No. U1730449 and No. U1530401),
and the National Basic Research Program of China (Grants No. 2016YFA0301201 and No. 2014CB921403).
\end{acknowledgments}
\appendix

\section{Proof for properties}\label{A1}
Here we offer the proofs of the listed properties of $C_{\mathcal{F}}(\rho)$ and some further remarks are presented.

(i) (Convexity). Consider a convex decomposition of a density operator $\rho=\sum_ip_i\rho_i$, and we have
\begin{align}
C_{\mathcal{F}}(\rho)&=\max_{U\in\mathcal{U}_d}\langle\phi^+|U^\dagger\rho U|\phi^+\rangle\\
&=\langle\phi^+|U_\star^\dagger\rho U_\star|\phi^+\rangle\\
&=\sum_ip_i\langle\phi^+|U_\star^\dagger\rho_i U_\star|\phi^+\rangle\\
&\leq\sum_ip_iC_{\mathcal{F}}(\rho_i).
\end{align}
The last inequality stems from the fact that $U_\star$ may not be the optimal choice for $\rho_i$.
The convexity indicates that the mixing of states will never increase the value of $C_{\mathcal{F}}(\rho)$.

(ii) For any $\rho\in\mathcal{D}(\mathbb{C}^d)$, we suppose $C_{\mathcal{F}}(\rho)=\langle\phi_\star|\rho|\phi_\star\rangle$,
where $|\phi_\star\rangle$ is the optimal $\mathcal{MCS}$.
Note that $|\phi_\star\rangle$ can be expressed as $|\phi_\star\rangle=U_\star\mathcal{F}|0\rangle$, where $U_\star$ is a diagonal unitary operator
and $\mathcal{F}$ is the Fourier matrix with the elements $\mathcal{F}_{ij}=\omega^{ij}/\sqrt{d}$ ($\omega=e^{2\pi\textrm{i}/d}$).
Therefore, we can always construct a set of maximally coherent states including $|\phi_\star\rangle$ as an orthonormal basis in $\mathcal{H}(\mathbb{C}^d)$,
i.e., $\{|\phi_i\rangle=U_\star\mathcal{F}|i\rangle\}_{i=0}^{d-1}$. Therefore, $C_{\mathcal{F}}(\rho)$ is always larger than $1/d$ since
\begin{align}
1=\sum_i\langle\phi_i|\rho|\phi_i\rangle\leq d\langle\phi_\star|\rho|\phi_\star\rangle=dC_{\mathcal{F}}(\rho).
\end{align}

On the other side, let $\rho=\sum_i\lambda_i|\psi_i\rangle\langle\psi_i|$ be the eigen-decomposition such that $\sum_i\lambda_i=1$, $0\leq\lambda_i\leq 1$,
and $\{|\psi_i\rangle\}$ are the corresponding eigenstates. Assuming $|\phi_\star\rangle$ is the optimal $\mathcal{MCS}$, we can rewrite
\begin{align}
C_{\mathcal{F}}(\rho)=\langle\phi_\star|\rho|\phi_\star\rangle=\sum_i\lambda_i\chi_i\leq\lambda_{max},
\end{align}
where we define $\chi_i=\langle\phi_\star|\psi_i\rangle\langle\psi_i|\phi_\star\rangle$ with $\sum_i\chi_i=1$ and $\lambda_{max}$ is the largest eigenvalue.
This upper bound can be reached when $\rho$ belongs to the set of \textit{maximally coherent mixed states} (MCMS) \cite{Singh2015,Yao2016,Streltsov2018}, i.e.,
the eigenstates of $\rho$ constitute a mutually unbiased basis with respect to the incoherent basis, which implies that
in this case all eigenvectors are $\mathcal{MCS}$ and mutually orthogonal.

For instance, in a qutrit system, there exist three mutually unbiased bases (MUBs) regarding
the incoherent basis $B^{(0)}=\{|b_i^{(0)}\rangle\}=\{|0\rangle,|1\rangle,|2\rangle\}$ \cite{Bandyopadhyay2002,Durt2010}:
\begin{align}
B^{(1)}=\Big\{&|b_0^{(1)}\rangle=(1/\sqrt{3})(|0\rangle+|1\rangle+|2\rangle),\nonumber\\
&|b_1^{(1)}\rangle=(1/\sqrt{3})(|0\rangle+\omega|1\rangle+\omega^2|2\rangle),\\
&|b_2^{(1)}\rangle=(1/\sqrt{3})(|0\rangle+\omega^2|1\rangle+\omega|2\rangle)\Big\},\nonumber
\end{align}
\begin{align}
B^{(2)}=\Big\{&|b_0^{(2)}\rangle=(1/\sqrt{3})(\omega|0\rangle+|1\rangle+|2\rangle),\nonumber\\
&|b_1^{(2)}\rangle=(1/\sqrt{3})(|0\rangle+\omega|1\rangle+|2\rangle),\\
&|b_2^{(2)}\rangle=(1/\sqrt{3})(|0\rangle+|1\rangle+\omega|2\rangle)\Big\},\nonumber
\end{align}
\begin{align}
B^{(3)}=\Big\{&|b_0^{(3)}\rangle=(1/\sqrt{3})(\omega^2|0\rangle+|1\rangle+|2\rangle),\nonumber\\
&|b_1^{(3)}\rangle=(1/\sqrt{3})(|0\rangle+\omega^2|1\rangle+|2\rangle),\\
&|b_2^{(3)}\rangle=(1/\sqrt{3})(|0\rangle+|1\rangle+\omega^2|2\rangle)\Big\}.\nonumber
\end{align}
A qutrit state of the form $\rho_j=\sum_i\lambda_i|b_i^{(j)}\rangle\langle b_i^{(j)}|$ with $j\neq0$ ($\lambda_i$ is arbitrary) belongs to the MCMS class.
It is easy to verify that $C_{\mathcal{F}}(\rho_j)=\lambda_{max}$ and the pure basis state of $B^{(j)}$ corresponding to the
largest eigenvalue can be chosen as the optimal $\mathcal{MCS}$.

On the other hand, employing the method of Lagrange multipliers, it can be shown that $C_{\mathcal{F}}(\rho)=\sum_i\lambda_i\chi_i$ achieves its minimum value $1/d$ if (and only if)
$\lambda_i$ \textit{or} $\chi_i$ are all equal. For $\lambda_i=1/d$, $\rho$ is the maximally mixed state $\openone/d$, which obviously belongs to $\mathcal{I}$.
In fact, $\chi_i=1/d$ is equivalent to $|\langle\phi_\star|\psi_i\rangle|^2=1/d$.
which implies that $\{|\psi_i\rangle\}$ is mutually unbiased with respect to $|\phi_\star\rangle$. Apparently, the incoherent basis $\{|\psi_i\rangle=|i\rangle\}$
satisfies this condition and actually $|\langle i|\phi\rangle|^2=1/d$ for \textit{any} $|\phi\rangle\in\mathcal{MCS}$.

(iii) First, we recall the definitions of the coherence rank $R_c$ and coherence number $N_c$. For any $|\psi\rangle\in\mathcal{D}(\mathbb{C}^d)$,
The coherence rank $R_c(|\psi\rangle)$ refers to the number of nonzero coefficients ($c_i\neq0$) in the incoherent basis
\begin{align}
R_c(|\psi\rangle)=\min\left\{r \ \Big\vert\ |\psi\rangle=\sum_{i=1}^rc_i|\tau_i\rangle,|\tau_i\rangle\in\mathcal{I}\right\}.
\end{align}
The coherence rank is extended to the coherence number for mixed states in a similar way as
the Schmidt rank is extended to the Schmidt number \cite{Terhal2000,Sanpera2001}:
\begin{align}
N_c(\rho)=\min_{\{p_i,|\psi_i\rangle\}}\max_iR_c(|\psi_i\rangle),
\end{align}
where the optimization is over all pure-state convex decompositions of $\rho$. If $N_c(\rho)=k$, from this definition,
we know that there exists a decomposition of $\rho=\sum_ip_i|\psi_i\rangle\langle\psi_i|$ with $R_c(|\psi_i\rangle)\leq k$
for all vectors $\{|\psi_i\rangle\}$. Utilizing Eq. (\ref{pure}) and the Cauchy-Schwarz inequality, we have
$C_{\mathcal{F}}(|\psi_i\rangle)\leq k/d$. Consequently, from the convexity property of $C_{\mathcal{F}}(\rho)$, we finally arrive at
\begin{align}
C_{\mathcal{F}}(\rho)\leq\sum_ip_iC_{\mathcal{F}}(|\psi_i\rangle)\leq\frac{k}{d}=\frac{N_c(\rho)}{d}.
\end{align}

As an illustration, one can consider $\rho_m$ in Eq. (\ref{MCMS}), which is a one-parameter subset of the MCMS.
From the results in Ref. \cite{Ringbauer2018}, $\rho_m$ has coherence number $k$ if and only if
\begin{align}
\frac{k-2}{d-1}< p \leq\frac{k-1}{d-1}.
\end{align}
On the other hand, if $N_c(\rho_m)=k$, then the QCF of $\rho_m$ is bounded by
\begin{align}
\frac{k-1}{d}<C_{\mathcal{F}}(\rho_m)=\frac{(d-1)p+1}{d}\leq\frac{k}{d}=\frac{N_c(\rho_m)}{d}.
\end{align}

(iv) In Ref. \cite{Lami2019}, the author constructed a family of SIO monotones ($1\leq k\leq d$):
\begin{align}
\mu_k(\rho):=\max_{I\subseteq[d],|I|\leq k}\log\left\|\Pi_I\Delta(\rho)^{-1/2}\rho\Delta(\rho)^{-1/2}\Pi_I\right\|_\infty,
\end{align}
where $I$ is a subset of $[d]:=\{1,\ldots,d\}$, $\Pi_I=\sum_{i\in I}|i\rangle\langle i|$ and $|I|$ is the cardinality.
Note that $\mu_k(\rho)$ is a key tool for investigating the SIO distillable coherence \cite{Lami2019}.
When $k=d$, $\mu_k(\rho)$ reduces to
\begin{align}
\mu_d(\rho)&=\log\left\|\Delta(\rho)^{-1/2}\rho\Delta(\rho)^{-1/2}\right\|_\infty\\
&=\min\left\{\gamma \,|\, \rho\leq 2^\gamma \Delta(\rho)\right\}\\
&=D_{max}(\rho\|\Delta(\rho)).
\end{align}
Based on the line of thought in the proof of Lemma 8 in Ref. \cite{Lami2019}, one can show that
\begin{align}
\mu_d(\rho)&=\log\left\|\Delta(\rho)^{-1/2}\rho\Delta(\rho)^{-1/2}\right\|_\infty\nonumber\\
&\geq\log d\langle\phi|\Delta(\rho)^{1/2}\Delta(\rho)^{-1/2}\rho\Delta(\rho)^{-1/2}\Delta(\rho)^{1/2}|\phi\rangle\nonumber\\
&=\log d\langle\phi|\rho|\phi\rangle,
\end{align}
where $|\phi\rangle$ is an arbitrary $\mathcal{MCS}$ and the inequality stems from the fact that $\Delta(\rho)^{1/2}|\phi\rangle$
is always a normalized pure state. When we take the maximum value of the lower bound, this inequality is equivalent to
\begin{align}
d C_{\mathcal{F}}(\rho)\leq 2^{\mu_d(\rho)}.
\end{align}

Alternatively, this inequality can also be proved directly from the definitions of $C_{\mathcal{R}}$ and $\mu_d(\rho)$
\begin{align}
1+C_{\mathcal{R}}(\rho)=\min_\sigma\{\textrm{tr}(\sigma) \,|\, \rho\leq\sigma, \sigma=\Delta(\sigma)\}\leq2^{\mu_d(\rho)},
\end{align}
since $\Delta(\rho)\in\{\sigma:\sigma=\Delta(\sigma)\geq0\}$. Moreover, in Theorem \ref{T1},
we have proved that $dC_{\mathcal{F}}(\rho)\leq1+C_{R}(\rho)$ and thus $d C_{\mathcal{F}}(\rho)\leq 2^{\mu_d(\rho)}$.

\section{Numerical simulation}\label{A2}
Setting $U_d=\textrm{diag}\{e^{\textrm{i}\theta_j}\}$ and $\theta_{ij}=\theta_i-\theta_j$, Eq. (\ref{QCF}) can be recast as
\begin{align}
C_{\mathcal{F}}(\rho)&=\max_{U\in\mathcal{U}_d}\sum_{i,j}\left[U^\dagger\rho U\right]_{ij}
=\max_{\{\theta_i\}}\sum_{i,j}\rho_{ij}e^{-\textrm{i}\theta_{ij}}\nonumber\\
&=1+\max_{\{\theta_i\}}2\sum_{i<j}\left[\Re(\rho_{ij})\cos(\theta_{ij})+\Im(\rho_{ij})\sin(\theta_{ij})\right]\nonumber\\
&:=1+2\max_{\{\theta_i\}}f(\theta_0,\theta_1,\ldots,\theta_{d-1}),
\label{numerical}
\end{align}
where $\Re(z)$ and $\Im(z)$ denote the real and imaginary parts of a complex number $z$ respectively.
Since the overall phase is irrelevant, one can set $\theta_0=0$ and thus the optimization in Eq. (\ref{numerical}) is actually over $d-1$ phases.

Furthermore, from Eq. (\ref{numerical}) it is demonstrated that the objective function $f(\theta_1,\ldots,\theta_{d-1})$ to be optimized
is a linear combination of $\sin(\theta_{ij})$ and $\cos(\theta_{ij})$ functions, which are all periodic functions. Therefore, it is easy to verify
that the objective function is in fact a \textit{multi-peak} function and there exists a considerable amount of \textit{local} maximums.
To obtain a global maximum, we need a robust and reliable global optimization algorithm for maximizing the function $f(\theta_1,\ldots,\theta_{d-1})$,
where we choose to employ the the Global Optimization Toolbox provided by {\sc matlab} \cite{https}. In low dimensions,
all the runs are rather successful.


\end{document}